\newcommand{\CFM}{\approx_{CF}}
\newcommand{\PDD}{\protect\operatorname{PD}}
\newcommand{\FPDD}{\protect\operatorname{F-PD}}
\newcommand{\sx}{\operatorname{small}_x}
\newcommand{\ex}{\operatorname{equal}_x}
\newcommand{\PD}{Parent-Distance}
\newcommand{\FPD}{Forest Parent-Distance}
\newcommand{\SN}{SN}
\newcommand{\FSN}{\protect\operatorname{F-SN}}
\newcommand{\Skipped}{Skipped-Number}
\newcommand{\FSkipped}{Fo\-rest Skip\-ped-Num\-ber}
\newcommand{\Fil}{\mathcal{F}il}
\newcommand{\SNRepres}{{\em Skipped-number} representation}
\newcommand{\REFD}[1]{\protect\operatorname{F-ref}_{#1}} % referent
\newcommand{\CT}{Car\-te\-sian Tree}
\newcommand{\CF}{Cartesian Forest}
\newcommand{\ST}{Schröder Tree}
\newcommand{\rb}{\operatorname{rb}}  % list of nodes on the  right branch
\newcommand{\rbs}{\operatorname{rbs}}  % list of nodes on the right branch of the left subtree
\newcommand{\RMP}{\operatorname{RB}} % length of the right branch
\newcommand{\leftsb}{\operatorname{left}}   % left subtree
\newcommand{\rightsb}{\operatorname{right}} % right subtree
\newcommand{\algoName}[1]{\textsc{#1}}
\newcommand{\LN}{\protect\operatorname{LN}} 
\newcolumntype{P}[1]{>{\centering\arraybackslash}p{#1}}
\newcommandx{\RQunsure}[2][1=]{\todo[linecolor=red,backgroundcolor=red!25,bordercolor=red,#1]{#2}}
\newcommandx{\RQchange}[2][1=]{\todo[linecolor=blue,backgroundcolor=blue!25,bordercolor=blue,#1]{#2}}
\newcommandx{\RQinfo}[2][1=]{\todo[linecolor=OliveGreen,backgroundcolor=OliveGreen!25,bordercolor=OliveGreen,#1]{#2}}
\newcommandx{\RQimprovement}[2][1=]{\todo[linecolor=Plum,backgroundcolor=Plum!25,bordercolor=Plum,#1]{#2}}
\def\ctbord{CTBord}%
\begin{document}
\title{\CF~Matching}
%
%\titlerunning{Abbreviated paper title}
% If the paper title is too long for the running head, you can set
% an abbreviated paper title here
%
\author{Bastien Auvray\inst{1,2} \and
Julien David\inst{1,3} \and
Richard Groult\inst{1,2} \and
Thierry Lecroq\inst{1,2}} %\orcidID{0000-0002-1900-3397}
\authorrunning{B. Auvray et al.}
% First names are abbreviated in the running head.
% If there are more than two authors, 'et al.' is used.
%
\institute{CNRS NormaSTIC FR 3638, France \and
Univ Rouen Normandie, LITIS UR 4108, F-76000 Rouen, France
\and Normandie University, UNICAEN, ENSICAEN, CNRS, GREYC,
Caen, France}
\maketitle              % typeset the header of the contribution
%
%\thanks{Supported by the NormaSTIC Federation \url{https://www.normastic.fr/}.}

\begin{abstract}
In this paper, we introduce the notion of \CF, which generalizes
\CT s, in order to deal with partially ordered sequences.
We show that algorithms that solve both exact and approximate \CT~Matching
can be adapted to solve \CF~Matching in linear time in the worst-case for exact matching and linear time on average for approximate matching.
We adapt the notion of \CT~Signature to \CF s.
%and show 
%how filters can be used to experimentally improve the algorithm for the exact matching. 
We also show a one-to-one correspondence between \CF s and \ST s.
\keywords{\CT ~\and \CF ~\and Pattern Matching \and Approximate Pattern Matching \and \ST}
\end{abstract}

\section{Introduction}
Pattern matching consists of searching for one or all the 
 occurrences of a pattern in a text.
It is an essential task in many computer science applications.
It can take different forms.
For instance it can be done online when the pattern can be
 preprocessed or offline when the text can be preprocessed.
Occurrences can be exact or approximate.
When the pattern and the text are sequences of characters,
 it is known as string matching.
When searching patterns in time series data, the notion
 of pattern matching is a bit more involved.
Solutions can use \CT s that where introduced
 by Vuillemin in 1980~\cite{vuillemin80}.
 
\CT ~Matching has been introduced by Park \textit{et al.}~\cite{PALP19,PARK2020}.
Given a pattern, it consists of finding the factors of a text
 that share the same \CT ~as the \CT ~of the pattern.

Since then it has gained a lot of interest.
Efficient solutions for practical cases for online search
 were given in~\cite{SGRFLP21}.
Expected linear time algorithms are given
 in~\cite{auvray2025approximatecartesiantreematching} for approximate
 \CT ~Matching with one difference.

Indexing structures in the \CT ~pattern matching 
 framework are presented in~\cite{NFNI2021,KimC21,osterkamp_et_al:LIPIcs.CPM.2025.26}.
Methods for computing regularities are given in~\cite{KikuchiHYS20}
 and methods for computing palindromic structures are presented
 in~\cite{Funakoshi_et_al}.
An algorithm for episode matching
 (given two sequences $p$ and $t$, finding all minimal length factors of $t$ that contains $p$ as a subsequence) in \CT ~framework is presented in~\cite{OizumiKMIA22}.
Practical methods for finding longest common Cartesian substrings of two strings appeared in~\cite{FLPS22}.
Very recently, dynamic programming approaches for approximate \CT ~pattern matching with edit distance have been
 considered in~\cite{KimH24} and longest common \CT
~subsequences are computed in~\cite{TsujimotoSMNI24}.
 Efficient algorithms for determining if two equal-length indeterminate
 strings match in the \CT ~framework are given in~\cite{gawrychowski_et_al:LIPIcs.CPM.2020.14}.

\CT s are defined (Section~\ref{sec:ct}) on arbitrary sequences where there can be values with multiple occurrences.
In that case, ties should
be broken in some way.
In this article, in order to better deal with equal values we introduce \emph{\CF s} (Section~\ref{sec:cf}).

We show that algorithms that solve both exact and approximate \CT~Matching (Section~\ref{sec:CFM} and Section~\ref{sec:ACFM})
 can be adapted to solve \CF~Matching in linear time in the worst-case for exact matching and linear time on average for approximate matching.
We then show a one to one correspondence between \CF s and \ST s (Section~\ref{sec:combi}).
More specifically we give algorithms for computing a \ST ~and
 a special type of words with parentheses given a \CF.
We also adapt the notion of \CT~Signature introduced in~\cite{demaine09}
 to \CF s (Section~\ref{sec:sign}), and show
 how this notion can be used to experimentally improve the computation (Section~\ref{sec:XP}).

%\section{Cartesian Trees and previous results \label{sec:ct}}
\section{Preliminaries \label{sec:ct}}

In this paper, a sequence is always defined
on an ordered alphabet.
For a given sequence $x$, $\vert x\vert$ denotes the length of $x$.
A sequence $v$ is \emph{factor} of a sequence $x$ if $x=uvw$ for any sequences $u$ and $w$.
A sequence $u$ is a \emph{prefix} (resp. \emph{suffix}) of a sequence $x$ if $x=uv$ (resp. $x=vu$).
A prefix of a sequence that is also a suffix
 is called a \emph{border}.
For a sequence $x$ of length $m$,
 $x[i]$ is the $i$-th element of $x$ and $x[i\ldots j]$ represents the factor of $x$ starting at the $i$-th element and ending at the $j$-th element, for $1\le i\le j\le m$. 
%For simplicity, we assume all elements in a given sequence to be distinct and numbered from 1 to $\vert x\vert$, unless otherwise stated.

%%%%%%%%%%%%%%%%%%%%%%%%%%%%%%%%%%%%%%%%%%%%%%%%%%%%%%%%%%%%%%%%%%
\subsection{\CT s}

\begin{definition}[\CT ~$C(x)$] \label{def:ct}
    Given a totally ordered sequence $x$ of length $m$, the \emph{\CT} of $x$, denoted by $C(x)$, is the binary tree recursively defined as follows:
    \begin{itemize}
        \item if $x$ is empty, then $C(x)$ is the empty tree;
        \item if $x[1\ldots m]$ is not empty and $x[i]$ is the smallest value of $x$, $C(x)$ is the binary tree with $i$ as its root, the \CT ~of $x[1\ldots i-1]$ as the left subtree and the \CT ~of $x[i+1\ldots m]$ as the right subtree.
    \end{itemize}
\end{definition}

We denote by $\rb(T)$ the list of nodes on the \emph{right branch} of a \CT ~$T$.
Also, $C_i(x) = C(x[1\ldots i])$ to simplify the notations.
The \CT ~of a sequence can be built online in linear time and space~\cite{CTtime}.
Informally,
 let $x$ be a sequence such that $C_{i-1}(x)$ is already known. 
In order to build $C_i(x)$, one only needs to find the nodes $j_1<\cdots<j_k$ in $\rb(C_{i-1}(x))$ such that $x[j_1] > x[i]$.
Then, $j_1$ will be the root of the left subtree of node $i$.
If $j_1$ is the root of $C_{i-1}(x)$ then $i$ becomes the root of $C_i(x)$,
 otherwise let $j_0$ be the parent node of $j_1$, then node $i$ will be the root of the right subtree of $j_0$.
Then $\rb(C_i(x))=\rb(C_{i-1}(x))\setminus(j_1,\ldots,j_k) \cup (i)$.
All these operations can be easily done by implementing $\rb$ with a stack.
%Each element of the stack consists of  a pair $(val,pos)$ where $val$ is a symbol of $x$ and $pos$ is the
% associated position.
The amortized cost
 of such an operation can be shown to be constant.

\CT s are defined on arbitrary sequences where there can be multiple occurrences
 of the smallest value. In that case, ties should
 be broken in some way.
Usually the first occurrence of the smallest value is chosen to be the root of the tree.

Given a \CT\ $C(x)$ of a sequence $x$ of length $m$, 
a linear representation of $C(x)$,
denoted $\LN_x$, is a table of integers of length $m$, and 
the set of linear representations of a given length
is in bijection with the set of \CT s of the same size.
A linear representation of \CT s was introduced by Park \textit{et al.}~\cite{PALP19}: the parent-distance representation (see example \figurename~\ref{fig:parent}).
Using this representation and classical string algorithms (Knuth-Morris-Pratt~(KMP)~\cite{KMP77} and Aho-Corasick~\cite{AC75} algorithms), they obtained linear-time solutions for single and multiple pattern \CT\ matching (see \figurename~\ref{fig:ln_bord} for an example).

Let $\sx(i) = \max(\{k \mid x[k] < x[i] \mbox{ with } 1\leq k<i \}\cup\{0\})$.

\begin{definition}[Parent-distance representation $\PDD_x$] \label{def:pd}
Given a sequence $x$ of length $m$, the \emph{parent-distance representation} of $x$ is an integer sequence $\PDD_x[1\ldots m]$, which is defined as follows:

$
\PDD_x[i] = 
\begin{cases}
%i - \max_{1\leq j<i}\{j\ |\ x[j] < x[i]\}& \mbox{if such $j$ exists}\cr
i - \sx(i) & \text{if } \sx(i)>0 \cr
0 & \text{otherwise.}
\end{cases}
$
\end{definition}

Since the parent-distance representation has a one-to-one mapping with \CT s, it can replace them without loss of information.

\begin{comment}
Example:
\begin{center}
\begin{tabular}{|l||c|c|c|c|c|c|c|c|c|}
\hline
$i$ & 1 & 2 & 3 & 4 & 5 & 6 & 7 & 8 & 9\\
\hline
$x[i]$ & 3 & 1 & 6 & 4 & 8 & 6 & 7 & 5 & 9\\
\hline
$\PDD_x[i]$ & 0 & 0 & 1 & 2 & 1 & 2 & 1 & 4 & 1\\
\hline
\end{tabular}
\end{center}
\end{comment}

The parent-distance table is not the only linear representation of \CT s.
Another linear representation is based on the number of nodes skipped by a new node when building online the \CT\ of a sequence.
Informally, for each node $i$, it stores the number of nodes $(j_1 < \cdots < j_k)$ deleted from the right path when computing $C_i(x)$ from $C_{i-1}(x)$.
We say that node $i$ skipped nodes $(j_1 < \cdots < j_k)$ and that these nodes are skipped by node $i$.  
\begin{comment}
\begin{definition}[Skipped-nodes representation $\rbs_x$] \label{def:skippednodes}
Given a se\-quen\-ce $x[1\ldots m]$, the {\em Skipped-nodes representation} of $x$ is a sequence of sets $\rbs_x[1\ldots m]$
such that $\rbs_x[i]$ is the right-branch of the left-subtree of the subtree rooted at node $i$ of the \CT\ of $x$, that is
$\rbs_x[i] = \rb( \leftsb(C_i(x))). $
\end{definition}
\end{comment}
The left subtree of a \CT\ $C(x)$ is denoted
$left(C(x))$ and
the length of its right branch $\RMP(C(x)) = |\rb(C(x))|$. 

See \figurename~\ref{fig:ln_bord} for an example.

\begin{definition}[Skipped-number representation $\SN_x$] \label{def:SN}
Given a sequence $x[1\ldots m]$, the {\em Skipped-number representation} of $x$ is a se\-quen\-ce $\SN_x[1\ldots m]$
of integers
such that $\SN_x[i]$ is the length of the right-branch of the left-subtree of the subtree rooted at node $i$ of the \CT\ of $x$, that is
% which can be defined as follows:
%$$.\SN_x[i] = \RMP(A)\text{, where $A$ is the left subtree of the tree rooted in $i$}$$ 
$\SN_x[i] = \RMP( \leftsb(C_i(x)) )$.
\end{definition}
Both linear representations can be computed in linear time~\cite{PALP19,demaine09,auvray2025approximatecartesiantreematching}.

%%%%%%%%%%%%%%%%%%%%%%%%%%%%%%%%%%%%%%%%%%%%%%%%%%%%%%%%%%%%%%%%%%
\subsection{Exact \CT\ Matching}

Given two sequences $x$ and $y$ of length  $m$ and $n$ respectively, the \emph{Exact \CT\ Matching} problem (CTM) consists
 of finding all the factors of $y$ that have
 the same \CT\ of the one of $x$.
This can be done in time $O(m+n)$ by
 using a KMP-like algorithm.
To achieve this complexity, a \emph{Cartesian border table} is used.
%The Parent-Distance representation of a  sequence $x$ %of length $m$ can be computed in time $O(m)$ time using an algorithm similar to the one
% for building the \CT\ of $x$.
Given the Parent-Distance representation of $x$,
 the Parent-Distance of a factor $x[i\ldots j]$ of $x$ satisfies:

$\PDD_{x[i\ldots j]}[k] =
\begin{cases}
0 & \mbox{if } \PDD_x[i+k-1] \geq k,\cr
\PDD_x[i+k-1] & \mbox{otherwise}.
\end{cases}$

\begin{definition}[Cartesian Border Table $\ctbord$]
The Cartesian border table of $x$ is defined in the following way:
$\ctbord[0] = 0$ and, for $1 \leq k < i$, 
$\ctbord[i] = \max\{k \mid C_k(x) = C(x[i-k+1\ldots i])\}$.
\end{definition}

\figurename~\ref{fig:ln_bord} shows the Cartesian border table of
$x = (3, 1, 6, 4, 8, 6, 7, 5, 9$).

\begin{figure}[!t]
\begin{center}
\begin{tabular}{|l||c|c|c|c|c|c|c|c|c|}
\hline
$i$ & 1 & 2 & 3 & 4 & 5 & 6 & 7 & 8 & 9\\
\hline \hline
$x[i]$ & 3 & 1 & 6 & 4 & 8 & 6 & 7 & 5 & 9\\
\hline
$\PDD_x[i]$ & 0 & 0 & 1 & 2 & 1 & 2 & 1 & 4 & 1\\
\hline
$\SN_x[i]$ & 0 & 1 & 0 & 1 & 0 & 1 & 0 & 2 & 0\\
\hline
$\ctbord[i]$ & 0 & 1 & 1 & 2 & 3 & 4 & 5 & 2 & 3\\
\hline
\end{tabular}
\caption{\label{fig:ln_bord} The parent-distance representation, the skipped-number representation and the Cartesian border table of a sequence $x$.}
\end{center}
\end{figure}

%%%%%%%%%%%%%%%%%%%%%%%%%%%%%%%%%%%%%%%%%%%%%%%%%%%%%%%%%%%%%%%%%%
\subsection{Approximate \CT\ Matching}

Given two sequences $x$ and $y$ of length  $m$ and $n$ respectively, the \emph{Approximate \CT\ Matching} problem (ACTM) consists
 of finding all the factors of $y$ that have
 the same \CT\ of the one of $x$ up to some differences.

Unfortunately, the linear-time solutions obtained in~\cite{PALP19} for the exact case do not work anymore in the general approximate case. Also we conjecture that there is no
algorithm that solves approximate pattern matching with a linear
worst-case complexity.

In~\cite{auvray2025approximatecartesiantreematching}, the authors provide the following MetaAlgorithm that can solve ACTM
with up to one difference (substitution, insertion, deletion, swap) in quadratic
worst-case time complexity and in linear time in average.
The test at line $5$, using the $\sim$ symbol, signifies that depending on the considered  notion of error, we will not test equality between linear representation
but rather test an equivalence.

\begin{algorithm}[h!]\label{algo:meta}
  \caption{\algoName{metaAlgorithm}($p$, $t$)}
  \SetAlgoLined
  \SetKwInOut{KwIn}{Input}
  \SetKwInOut{KwOut}{Output}
  \KwIn{Two sequences $p$ and $t$ of length $m$ and $n$}
  \KwOut{The number of positions $j$ such that $p$ is equivalent to $t[j\ldots j+m-1]$ }
  $occ \leftarrow 0$\\
  $x \leftarrow t[1\ldots m]$\\
  $\LN_{p}, \LN_{x} \leftarrow$ linear representations of $C(p)$ and $C(x)$\\
  \For{$j \in \{1,\ldots,n-m+1\}$}{
    \If{$\LN_{p} \sim \LN_{x}$}{
        $occ \leftarrow occ +1$
    }
    $x \leftarrow t[j+1\ldots j+m]$\\
    Update $\LN_{x}$\\
  }
  \Return{$occ$}
\end{algorithm}

\section{Cartesian Forests \label{sec:cf}}
A limitation of \CT s is that it does not take into
account equal values in the original sequence.
For instance, the sequence $(1,1,1,1,1)$ shares the
same \CT\ as $(1,2,3,4,5)$ when ties are broken by ordering equal values according to their positions. 
In order to better deal with equal values we now introduce \CF s.
%We first introduce the combinatorial object in itself, then we define
%what the \CF ~of a sequence is.

Recall that a planar tree, or general tree, is a tree in which
each node possesses a sequence of children, or subtrees.
In the following, \emph{trees} designates planar trees.

Informally, during the construction of $C_i(x)$ from $C_{i-1}(x)$,
 assume that node~$i$ should be inserted in the right branch between
 nodes $j_0$ and $j_1$ such that
 $x[j_0] \le x[i] < x[j_1]$.
In a \CT, node~$i$ becomes the root of the right subtree of node $j_0$ and node $j_1$ becomes the root of the left subtree of node $i$.
In a \CF, node $i$ becomes the root of the right subtree of node $j_0$ only if $x[j_0] < x[i]$ while it becomes the right sibling of node
 $j_0$ if $x[j_0] = x[i]$ and node $j_1$ remains in the right subtree of node $j_0$.

A \emph{forest} is a sequence of trees. 
An empty forest is an empty sequence that does not contain any tree.
The children of a node can be seen as a \emph{sub-forest}.
In the following, a node of a tree in a \emph{\CF}, 
instead of having a sequence of children, can possess a 
{\em left} sequence  of children and a {\em right} sequence of children, which
we call the {\em left sub-forest} and the {\em right sub-forest}.
The leftmost tree of a \CF\ is actually the only one 
that can possess a left sub-forest and a right sub-forest,
whereas all the other trees can only have a right sub-forest.
\figurename~\ref{fig:forest} illustrates this with an example of a \CF.

%We now introduce the \CF\ of a sequence $x$ where all the %positions of the smallest value
% in $x$ constitute roots

\begin{definition}[\CF ~$F(x)$ of a sequence $x$] \label{def:cf}
    Given a sequence $x$ of length $m$, the \emph{\CF} of $x$, denoted by $F(x)$, is recursively defined as follows:
    \begin{itemize}
        \item if $x$ is empty, then $F(x)$ is the empty forest;
        \item if $x[1\ldots m]$ is not empty, let $(r_1, \ldots, r_k)$ be the ordered
        sequence of all the $k$ positions of the smallest value of $x$ and let $r_{k+1} = m+1$, 
        then $F(x)$ is a forest composed of $k$ trees whose roots are $r_1, \ldots, r_k$, such that:
        \begin{itemize}
            \item the \emph{left} sub-forest of $r_1$ is the \CF ~$F(x[1\ldots r_1-1])$, and
            \item for $1 \le i \leq k$, the \emph{right} sub-forest of $r_i$ is the \CF ~$F(x[r_i+1\ldots r_{i+1}-1])$.
        \end{itemize} 
    \end{itemize}
\end{definition}

\begin{figure}[!ht]
\begin{center}
\includegraphics[scale=0.7]{figures/forest-eps-converted-to.pdf}

\caption{
The \CF\ associated to an ordered sequence $x$.
\label{fig:forest}
Roots $r_2$ to $r_4$ are created since the sequence contains
a value equal to the one at position $5$.
The values between two roots $r_i$ and $r_{i+1}$ belong to the tree
enrooted in $r_i$.
Therefore, except for the leftmost tree of a \CF, 
a tree cannot have a left sub-forest. 
This idea is true at every level and can be seen in this example at positions $4$ and $12$.
}
\end{center}
\end{figure}

\section{Exact \CF~Matching \label{sec:CFM}}
%\subsection{Linear Representations of \CF s}

We adapt the linear representations of \CT s, such as the 
\PD~\cite{PALP19} and the \Skipped ~representation~\cite{demaine09}
to \CF s.
Basically, for a sequence $x$, when computing $C_i(x)$
 from $C_{i-1}(x)$, the \PD ~of $i$ is equal to the distance between $i$ and its parent in $C_i(x)$
 and its Skipped-Number is the number of nodes removed from the right branch
 of $C_{i-1}(x)$ comparing to the right branch of
 $C_i(x)$ (see~\cite{auvray2025approximatecartesiantreematching}).

The main idea of the adaptation to \CF s is simple: both representations involve
the comparisons of two values $x[j]$ and $x[k]$ in the original sequence $x$, 
for any $j < k$.
Suppose we have either $x[j]> x[k]$ or $x[j]< x[k]$, then
 we are in the case of \CT s and the linear representation
 is the same. The idea for the \PD ~of the \CF\ of a sequence $x$ is that
 its absolute value at a position $i$ gives
the distance between a node $i$ and its parent 
or left sibling
in the \CF\ associated to the sequence $x[1\ldots i]$.
Let
$\ex(i) = \max(\{k \mid x[k] = x[i]\mbox{ with } 1\leq k<i\}\cup\{0\})$.

\begin{definition}[\FPD\ representation $\FPDD_x$] \label{def:FPD}
Given a sequence $x$ of length $m$, the \emph{\FPD\ representation} of $x$ is an integer sequence $\FPDD_x[1\ldots m]$ defined as follows:

$\FPDD_x[i] = 
\begin{cases}
i - \sx(i) &  \text{if } \sx(i) > \ex(i) \cr
-(i - \ex(i)) &  \text{if } \sx(i) < \ex(i) \cr
0 &  \text{otherwise.}
\end{cases}
$
\end{definition}

The referent table that we define next contains the position
of the next smaller (or equal) value in the sequence.
\begin{definition}[The referent table $\REFD{x}$] \label{def:skippednodes}
Given a sequence $x[1\ldots m]$, the \emph{referent table} of $x$ is a sequence of sets $\REFD{x}[1\ldots m]$
such that  

$\REFD{x}[i] = 
\begin{cases}
\min_{i< j\leq m}\{j \mid x[j] \le x[i]\} & \text{if such } j \text{ exists}\cr
-1 & \text{ otherwise.}
\end{cases}$
\end{definition}

\begin{definition}[\FSkipped ~representation $\FSN_x$] \label{def:FSN}
Given a sequence $x[1\ldots m]$, the \emph{\FSkipped ~representation} of $x$ is an integer sequence $\FSN_x[1\ldots m]$
such that 

$\FSN_x[i] = 
\begin{cases}
|\{ j < i \mid \REFD{x}[j] = i\} | & \mbox{if } \sx(i) > \ex(i) \cr
-|\{ j < i \mid \REFD{x}[j] = i\} | & \mbox{if } \sx(i) < \ex(i) \cr
\end{cases}
$
\end{definition}

\figurename~\ref{fig:parent} shows examples of \FPD\ representations, \FSkipped\  representations  and referent tables of two \CF s.

\begin{figure}[!ht]
\begin{minipage}{0.49\textwidth}
\includegraphics[scale=0.75]{figures/cartesian_forest2-eps-converted-to.pdf}
\end{minipage}
\begin{minipage}{0.49\textwidth}
\includegraphics[scale=0.75]{figures/cartesian_forest-eps-converted-to.pdf}
\end{minipage}
\begin{center}

\caption{\label{fig:parent}%
Two sequences $x$ and $y$,  
their associated \CF s $F(x)$ and $F(y)$,
and their corresponding \FPD~representations, \FSkipped ~representations and referent tables. 
As one can see, $x$ is a prefix of $y$ and the forest $F(x)$ is transformed into a sequence
of left subtrees in $F(y)$.
}
\end{center}
\end{figure}
Given two sequences $x$ and $y$, we will denote $x \CFM y$ when the two sequences share the same \CF. 
\begin{definition}[\CF ~Matching (CFM)] \label{def:cfm}
Given two sequences $p[1\ldots m]$ and $t[1\ldots n]$, 
find every position $j$, with $1 \le j \le n - m + 1$, such that $t[j\ldots j+m-1] \CFM p[1\ldots m]$.
\end{definition}

\begin{example}
Let $t=(5,7,3,6,3,7,2,8,2,4,3,3)$ and $p=(2,3,1,4,1,5)$ respectively be the text and the pattern. We have two occurrences of $p$ in $t$ as $t[1\ldots 6] = (5,7,3,6,3,7)\CFM p$ and  
$t[5 \ldots 10] = (3,7,2,8,2,4) \CFM p.$
\end{example}

\begin{proposition}
Algorithm~\ref{algo:meta} solves the CFM problem in $\mathcal{O}(n)$ space and $\mathcal{O}(mn)$ time in the worst-case and $\mathcal{O}(n)$ time on average.
\end{proposition}

\begin{proposition}
Using a \CF\ border table, the CFM problem can be solved in $\mathcal{O}(n)$ time and space by adapting the KMP approach used in \cite{PALP19}.
\end{proposition}

Note that, in order to prove both propositions, it is sufficient to prove that two sequences share the same Cartesian Forest if and only if they share the same linear representation.
It can done by adapting the proof
from~\cite{PALP19}\footnote{see Appendix A.1}. 
\begin{comment}
It was originally applied to solve the \CT ~Matching problem.
In the algorithm, $\LN_x$ is a linear representations of the \CF
~of a sequence $x$.
It can indifferently be
its \PD ~table or its \SNRepres.
The authors show that the algorithm has a $\mathcal{O}(mn)$ worst-case time-complexity and
a $\mathcal{O}(m)$ space-complexity. It is also proved
that the average-case time-complexity is $\mathcal{O}(n)$ in
several random models.
The same results hold for \CF ~Matching
 (even if some average-case results
require an adaptation of the proofs, the main ideas remain the same).
\end{comment}

\section{Approximate \CF ~Matching \label{sec:ACFM}}

The above approach for exact CFM can be extended to \emph{approximate CFM}
 with one difference
 in a similar way as approximate CTM with one difference is done
 in~\cite{auvray2025approximatecartesiantreematching}.
 Given a sequence $x$ of length $m$ and $i$ a position $1\le i < m$, 
let $y$ be the sequence defined by
 swapping the elements in positions $i$ and $i+1$ in $x$.
In~\cite{auvray2025approximatecartesiantreematching}, it is shown
 that there are at most $3$ mismatches between the \SNRepres~of the \CT
 ~of $x$ and the \SNRepres~of the \CT ~of $y$.
The same result holds for the \FSkipped ~representation of the \CF
 ~of $x$ and the \FSkipped ~representation of the \CF ~of $y$ considering that two
  elements match if their absolute values are equal.
Thus, based on the results
 in~\cite{auvray2025approximatecartesiantreematching},
 the \emph{CFM with one swap} of a pattern of length $m$ in a text of
 length $n$ can be done in time $O(mn)$ in the worst case and
 in linear time in average.

\emph{Approximate CTM with one mismatch, one insertion or one deletion} is solved
 by comparing 
 the \PD ~representations from left to right and 
 the \PD ~representations from right to left 
 of the \CT s of $x$ and $y$.
The \FPD ~representation from right to left of a \CF ~can be
 defined in a similar way as for a \CT.
And then, \emph{approximate CFM with one mismatch, one insertion or one deletion}
 can be solved in a similar way than approximate CTM with one mismatch, one insertion or one deletion.
Thus, based on the results
 in~\cite{auvray2025approximatecartesiantreematching},
 the CFM with one mismatch, one insertion or one deletion
 of a pattern of length $m$ in a text of
 length $n$ can be done in time $O(mn)$ in the worst case and
 in linear time in average.

\section{Combinatorics of the \CF s \label{sec:combi}} 
\subsection{Recursive Definition}
\begin{definition}[\CF ~$F$] \label{def:purecf}
A \CF\ $F$ can be:
\begin{itemize}
    \item empty,
    \item a sequence of $k$ trees rooted in $(r_1,\ldots,r_k)$, with $k\ge 1$, such that 
    $r_1$ has a left Cartesian sub-Forest and a right Cartesian sub-Forest, and
    $r_i$ has only a right Cartesian sub-Forest for all $i \in\{2,\ldots, k\}$ (the left sub-Forest is necessarily empty).
\end{itemize}
\end{definition}

\figurename~\ref{fig:valid_forest} shows examples of one \CF ~and two forests that are not Cartesian.

\begin{figure}[!ht]
\begin{center}
\includegraphics[scale=0.6]{figures/Cartesian_Forest_and_not-eps-converted-to.pdf}
\caption{
On the left is a valid \CF. 
In the middle, it is not a valid \CF ~because the second tree in the right sub-forest of $r'_1$ has a left sub-forest.
On the right, the forest is not Cartesian: the second tree has a left sub-forest.
\label{fig:valid_forest}
}
\end{center}
\end{figure}

According to Definition~\ref{def:purecf}, a \CF ~$F$ can be either empty (denoted by $\emptyset$) or contains
at least one node. This node - the leftmost root - has a left sub-forest 
and a right sub-forest, but it can also have siblings.
A sibling $S$ is a particular \CF ~that cannot
have a left sub-forest. Therefore, we obtain
the following recursive decomposition:
$
\begin{cases}
\includegraphics[scale=0.2]{figures/grammaire_ligne1-eps-converted-to.pdf}\\
\includegraphics[scale=0.2]{figures/grammaire_ligne2-eps-converted-to.pdf}\\
\end{cases}
$

\subsection{Generating Function}

Let $F(z) = \sum_{n\ge 0} f_n z^n$ be the generating function of \CF s, where $f_n$ counts the number of \CF s with $n$ nodes, and
$S(z)$ be the generating function of siblings.
The translation between the recursive decomposition is made using
the Symbolic Method~\cite{flajolet2009analytic}:
$$
\begin{cases}
F(z) = z\cdot F^2(z)\cdot S(z) +1 \\
S(z) = z\cdot F(z) \cdot S(z) +1
\end{cases}
$$ 
from which we obtain that 
$S(z) = \frac{1}{1-z\cdot F(z)}$ and  $F(z)  = 1 + \frac{z\cdot F^2(z)}{1-z\cdot F(z)}$.

It can be shown from here that $F(z) = \frac{1+z+\sqrt{1-6z+z^2}}{4z}$, whose 
associated coefficients are known to be Schröder–Hipparchus numbers, also called super-Catalan numbers (Sequence \href{https://oeis.org/A001003}{$A001003$} on OEIS) enumerated by the following formula:
$f_n= \sum_{i=1}^n \frac1n \binom{n}{i}\binom{n}{i - 1}2^{i - 1}$,
which, amongst other things, counts the number of ways of 
inserting parentheses into a sequence of $n+1$ symbols, where each pair of parentheses surrounds at least two symbols or parenthesized groups, and without any parentheses surrounding the entire sequence.
In~\cite{bouvel2011average}, the authors show that: $$f_n \sim \frac{\sqrt{3\sqrt{2}-4}}{4\sqrt{n^3\pi}}(3+2\sqrt{2})^n \sim 0.07 (5.828)^n n^{-\frac32}.$$

In the following subsections, we describe bijections between
\CF s and classical combinatorial objects counted by the Schröder-Hipparchus numbers. We recall that since those objects all share the same generating function, there exists a bijection between those sets\footnote{\figurename~\ref{fig:bijection} (Appendix)
shows the one to one correspondence for $n \in \{1,2,3\}$ between \CT s, \ST s and Parentheses Words}. 
Any injection from one set to the other is a bijection.

\subsection{A bijection with Schröder Trees}

\begin{definition}[\ST~$ST$] \label{def:st}
A \ST ~is a tree whose internal nodes have two or more subtrees.
\end{definition}
\ST s with $n+1$ leaves are counted by $f_n$. 
Given a \CF ~$F$ with $k$ roots $(r_1, \ldots, r_k)$,
we denote $\leftsb(r_1)$ the left sub-forest of the root $r_1$.
For $i \in \{1,\ldots, k\}$, we denote $\rightsb(r_i)$ the right sub-forest of $r_i$.
From a \CF ~with $n\geq 1$ nodes, Algorithm~\ref{algo:CF_to_ST} gives a recursive approach to building a \ST ~with $n+1$ leaves.

\begin{figure}[ht!]
\begin{minipage}{0.49\textwidth}
\begin{algorithm}[H]\label{algo:CF_to_ST}
  \caption{\algoName{CFtoST}$(F)$}
  \SetAlgoLined
  \SetKwInOut{KwIn}{Input} 
  \SetKwInOut{KwOut}{Output} 
  \KwIn{A \CF~$F$ w. $n$ nodes and $k$ roots, s.t. $F = (r_1,\ldots,r_k)$}
  \KwOut{A \ST with $n+1$ leaves}
  $ST \leftarrow$ a root\;
  \If{$F$ is not empty}{
  $(c_1, \ldots, c_{k+1}) \leftarrow$ Create a tuple of subtrees\;
  $c_1 \leftarrow$ \algoName{CFtoST}$(\leftsb(r_1))$\;
  \For{$i \in \{2, \ldots, k+1\}$}{
  $c_i \leftarrow$ \algoName{CFtoST}$(\rightsb(r_{i-1}))$\;
  }
  $ST \leftarrow$ Add subtrees $(c_1, \ldots, c_{k+1})$ to $ST$\;
  }
  \Return{$ST$}\;
\end{algorithm}
\end{minipage}
\begin{minipage}{0.49\textwidth}
\begin{algorithm}[H]\label{algo:CF_to_W}
  \caption{\algoName{CFtoW}($F$)}
  \SetAlgoLined
  \SetKwInOut{KwIn}{Input} 
  \SetKwInOut{KwOut}{Output} 
  \KwIn{A \CF~$F$ with $n$ nodes and $k$ roots, s.t. $F = (r_1,\ldots,r_k)$}
  \KwOut{A sequence with parentheses and $n+1\ \square$ }
  $w \leftarrow \square$\;
  \If{$F$ is not empty}{
    $(w_1 \cdots w_{k+1}) \leftarrow$ Create a tuple of words \;
  $w_1 \leftarrow \algoName{CFtoW}(\leftsb(r_1))$\;
  \For{$i \in \{2, \ldots, k+1\}$}{
  $w_i \leftarrow \algoName{CFtoW}(\rightsb(r_{i-1}))$\;
  }
  $w \leftarrow \bm{(}w_1 \cdot w_2 \cdots w_{k+1}\bm{)}$\;
  }
  \Return{w}\;
\end{algorithm}
\end{minipage}
\caption{Algorithms that take a \CF\ as input and return the corresponding \ST\ (algorithm on the left) or Parentheses Word (algorithm on the right).
As one can see, both methods are very similar. 
In both cases, two different inputs imply two different outputs. Hence 
both functions are injections. Since there is the same number of objects of the 
same size in all three cases, the functions are bijections.}
\end{figure}

Algorithm~\ref{algo:CF_to_ST} is an injective function: at each level,
the number of subtrees added to the \ST\ is exactly $k+1$,
where $k$ is the number of trees in the \CF.
Since both objects share the same generating function, we obtain the result announced in Lemma~\ref{lm:CF_to_ST}.

\subsection{A bijection with Parentheses Words}

\begin{definition}[Parentheses Word $w$]
\label{def:parentheses}
A \emph{Parentheses Word} $w$ is a word
over the alphabet $\{ \bm{(} ~,~  \square ~,~ \bm{)}  \}$ such that
% over the alphabet $\Sigma=\{ '(' ~,~  '.' ~,~ ')'  \}$ such that
either $w=\square $ or $w=\bm{(}w_1 \cdots w_k\bm{)}$ 
    where $k\ge 2$ and each $w_i$ is a Parentheses Word.
\end{definition}

Do note that, in this definition, unlike the more commonly found definition of these words, we allow parentheses to surround the entire sequence. We notably do so in order to simplify Algorithm~\ref{algo:CF_to_W} and to make the bijection more apparent to the reader.
But since every parentheses word that is not
$\square$ is contained between a pair of parentheses, it is not necessary to represent it. 
Informally, one may consider the symbols of the word as separators between the nodes and each group of parentheses as a (sub-)forest.
Using the same arguments as for Schröder Tree, Algorithm~\ref{algo:CF_to_W} is an injective function.

\begin{lemma}\label{lm:CF_to_ST}
Algorithms~\ref{algo:CF_to_ST} and~\ref{algo:CF_to_W} are bijective functions that map a \CF\ with $n$ nodes to a \ST\ with $n+1$ leaves (Algorithm~\ref{algo:CF_to_ST}), and to a Parentheses Word with $n+1$ symbols (Algorithm~\ref{algo:CF_to_W}).
They both have a $\Theta(n)$ time-complexity and a $\Theta(n)$ worst-case space complexity.
\end{lemma}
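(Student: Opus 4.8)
The plan is to establish the bijectivity, the time complexity, and the space complexity of Algorithms~\ref{algo:CF_to_ST} and~\ref{algo:CF_to_W} separately, treating the two algorithms in parallel since, as noted, they are structurally identical (one emits tree nodes, the other emits the corresponding bracketing symbols).

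\textbf{Bijectivity.} First I would argue that each algorithm is well-defined and terminates: every recursive call is made on a strict sub-forest (either $\leftsb(r_1)$ or some $\rightsb(r_i)$), and the sum of the sizes of $\leftsb(r_1), \rightsb(r_1), \ldots, \rightsb(r_k)$ equals $n - k < n$, so the recursion depth and the total number of calls are bounded by $n+1$. Next I would check that the output is a valid object of the target class: when $F$ is empty the output is a single leaf (resp.\ $\square$), and when $F$ has $k \ge 1$ roots the root of $ST$ is given exactly $k+1 \ge 2$ subtrees (resp.\ $w$ is wrapped in a pair of parentheses enclosing $k+1 \ge 2$ parentheses words), matching Definitions~\ref{def:st} and~\ref{def:parentheses}. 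A simple induction on $n$ then shows the output has $n+1$ leaves (resp.\ $n+1$ occurrences of $\square$): the empty case contributes one, and in the recursive case the counts of the $k+1$ children sum, by the induction hypothesis, to $(|\leftsb(r_1)|+1) + \sum_{i=1}^{k}(|\rightsb(r_i)|+1) = (n-k) + (k+1) = n+1$. For injectivity I would show the map is invertible: given a \ST\ (resp.\ Parentheses Word), read off $k+1$ as the number of children of the root (resp.\ the number of top-level parentheses words inside the outermost pair), declare the first child to be the encoding of $\leftsb(r_1)$ and the remaining $k$ children to be the encodings of $\rightsb(r_1), \ldots, \rightsb(r_k)$, and recurse; this reconstructs $F$ uniquely and is a genuine two-sided inverse. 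Finally, since Lemma~\ref{lm:CF_to_ST}'s domain (\CF s with $n$ nodes) and codomain (\ST s with $n+1$ leaves, resp.\ Parentheses Words with $n+1$ symbols) are both counted by $f_n$ — \CF s by the generating-function computation of the previous subsection, \ST s as stated in the text, and Parentheses Words by the combinatorial interpretation of $f_n$ already quoted — an injective map between equinumerous finite sets is automatically bijective. (Alternatively, exhibiting the explicit inverse already gives bijectivity directly, so the counting argument is a redundant but reassuring cross-check.)

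\textbf{Complexity.} For the running time, note that each invocation on a forest with $k$ roots does $O(k+1)$ work outside of its recursive calls (allocating the tuple of $k+1$ slots and linking them), and the recursion tree has exactly one node per recursive call. The total work is therefore proportional to the sum of $(k_v + 1)$ over all calls $v$, where $k_v$ is the number of roots of the sub-forest passed to call $v$. Each of the $n$ nodes of $F$ is a root in exactly one sub-forest appearing in this decomposition, so $\sum_v k_v = n$; and the number of calls is at most $n+1$ (it equals the number of leaves produced). Hence the total time is $\Theta(n)$, and it is $\Omega(n)$ as well since the output alone has size $\Theta(n)$. For space, the output \ST\ (resp.\ word) has $\Theta(n)$ size, and the recursion stack has depth at most the height of the recursion, which is $O(n)$ in the worst case (e.g.\ a forest that is one long right spine), giving $\Theta(n)$ worst-case auxiliary space and $\Theta(n)$ total.

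\textbf{Anticipated obstacle.} The routine parts — termination, leaf counting, the complexity bookkeeping — are straightforward inductions. The one step that needs care is making the inverse map precise, specifically verifying that the decomposition ``first child $\leftrightarrow$ left sub-forest, remaining children $\leftrightarrow$ right sub-forests'' is unambiguous and that applying \algoName{CFtoST} (resp.\ \algoName{CFtoW}) to the reconstructed \CF\ returns the original object and vice versa. This hinges on the structural constraint in Definition~\ref{def:purecf} that only $r_1$ may carry a left sub-forest, which is exactly what guarantees that a \ST\ root with $k+1$ children corresponds to a \emph{unique} \CF\ with $k$ roots rather than to several; I would make this correspondence explicit to close the argument cleanly.
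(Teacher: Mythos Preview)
Your proposal is correct and follows the paper's own approach: the paper's argument is simply that the algorithm is injective (since the arity $k+1$ at each level determines $k$) and that both families share the same generating function, hence the map is bijective. Your write-up is considerably more thorough than the paper's one-sentence sketch---you add the explicit inverse, the leaf-count induction, and a full complexity analysis, none of which the paper spells out---but the underlying strategy is the same.
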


\section{\CF~Signature and \CF~Matching using a filter \label{sec:sign}}
In~\cite{demaine09}, the authors propose a \emph{signature} (or perfect hash) of a
\CT, based on the \SNRepres.
Given a \CT\ with $n$~nodes, its signature is an integer with at most $2n$ bits.
In this section, we extend this notion of signature to \CF s, obtaining
a signature with at most $3n$ bits.

\begin{definition}[\CF ~Signature]
Given  
the \FSkipped ~representation ~$\FSN_x$ of a sequence $x$ (see Definition~\ref{def:FSN}), its signature is defined in the following way: given a position $i \in \{1,\ldots, n\}$
in the \FSkipped ~representation, each value $\FSN_x[i]$ is encoded by a sequence of bits, that are concatenated to obtain an integer.
\begin{itemize}
    \item The first bits concern the sign of $\FSN_x[i]$: If $\FSN_x[i] = 0$, it is equal
    to~${\tt 0}$; The case $\FSN_x[i] < 0$ is encoded by ${\tt 10}$, and the case $\FSN_x[i] > 0$ is encoded by ${\tt 11}$.
    \item If $\FSN_x[i] \neq 0$, the following bits are a unary encoding of $|\FSN_x[i]|$: 
    $|\FSN_x[i]|-1$ bits equal to ${\tt 1}$ followed by a bit equals to ${\tt 0}$.
\end{itemize}
\end{definition}

Since the total number of skipped nodes cannot exceed $n$, a signature contains at most $3n$ bits.
This representation could be used efficiently as a perfect hash for small patterns,
to obtain an efficient algorithm to solve the \CF ~Matching problem: one
only needs to update the signature as one would update the \FSkipped ~representation of the chunk of text
compared to the pattern, using bitwise operations.
But it becomes inefficient when $3n$ exceeds the size of a register, as one cannot
use those operations efficiently anymore.
Therefore, in order to accelerate the \CF ~Matching in a Rabin-Karp fashion we now introduce a filtering method.
\begin{definition}[\CF ~$\tau$-Filter]
Given a sequence $x$ of length $n$ and its linear representation $\LN_x$, 
its \emph{$\tau$-Filter}, denoted by $\Fil_x$, is a sequence of $\tau$ bits such that $\Fil_x[i] = 0$ if $\LN_x[n-\tau+i] = 0$ and $\Fil_x[i] = 1$ otherwise.
\end{definition}

Algorithm~\ref{algo:meta} can be adapted: line $5$ is only tested if 
$\Fil_x$ is equal to $\Fil_p$. $\tau$ can be adapted to match
the size of a classical register, that is $32$, $64$ or $128$ bits.
Therefore, comparing two filters can be made in constant time.
The update of the filter can also be made in constant time by tracking
the positions in $\LN_x$ that have been updated (see~\cite{auvray2025approximatecartesiantreematching} for more details on the update function of the linear representations).
In the following Section, we only implemented the filter for the \FSkipped\  representation.

\section{Experiments \label{sec:XP}}
In this Section, we implemented a \CF ~$\tau$-Filter using
$\tau=64$.
The first set of experiments considers the uniform distribution over text and patterns of respective
length $n$ and $m$, over a $k$ letter alphabet. 
\figurename~\ref{xp_diff_pattern} sums up the details
and observations. 
As one can see, the \FPD\ version
is the slowest ({\tt PD meta}), then comes the one using the \FSkipped ~representation ({\tt SN meta}). The method using a filter ({\tt SN hash}) is even faster, but only with a slight margin.
Finally, the KMP version~\cite{PALP19} is the fastest one ({\tt PD KMP}). Note that we tried to implement the Skip-Search algorithm~\cite{SGRFLP21} without the SIMD instructions, but the experimental
results were disappointing. 
The average cost slightly increases with $k$, which is probably due to the cost of the update function, since the average \FPD ~increases with the size of the alphabet. %On the contrary, the standard deviation seems to decreases, since we obtain smoother curves. 
Note that, whatever the linear representation is, the number of comparisons at Line $5$ in Algorithm~\ref{algo:meta} is the same, since they encode the same prefix.
Therefore, it is likely that the extra cost of the \FPD\ representation is due to the fact that both requires to use a stack in their implementations
but more values are pushed in the stack, in the \FPD\ case.

\begin{figure}[ht!]
\begin{minipage}{0.32\textwidth}
\includegraphics[scale=0.25]{figures/xp_uniform_k2.png}
\end{minipage}
\begin{minipage}{0.32\textwidth}
\includegraphics[scale=0.25]{figures/xp_uniform_k4.png}
\end{minipage}
\begin{minipage}{0.33\textwidth}
\includegraphics[scale=0.25]{figures/xp_uniform_km.png}
\end{minipage}
\caption{\label{xp_diff_pattern}%
%Experiment: 
We randomly generated $10\,000$
patterns of size $m$ and texts of size 
$1\,000$. The $x$-axis represents $m$ and the $y$-axis is the average time taken by each algorithm in microseconds. From left to right, the alpĥabet size is respectively equal to $2$, $4$ and  $m$.
}
\end{figure}
The experiment in \figurename~\ref{xp:ent}  uses the random generator from~\cite{david:hal-04616899}. 
It shows the algorithm remains efficient even with (not too) low entropy.
It generates sequences uniformly amongst those of a fixed length over a fixed alphabet 
and a fixed Collision Entropy (that is Rényi Entropy with $\alpha = 2$).
The Collision Entropy is a function of the probability for two random variables (a letter in the pattern $p$ and the text $t$) to be equal.
The lower the entropy is, the higher average complexity of the algorithms. As a matter of fact, if the entropy reaches $0$, then both
the text and the pattern contains the repetition of a unique symbol, which corresponds to the worst-case complexity of Algorithm~\ref{algo:meta}.
Though, as one can see in \figurename~\ref{xp:ent},
the average cost of the algorithms
quickly drops and stabilizes.
The KMP version is the most efficient (this is in line with the results of~\cite{AmirAFS24} showing that the KMP algorithm is efficient for order preserving matching) and is (unsurprisingly) very stable,
even when the entropy is low.

\begin{figure}[ht!]
\begin{minipage}{0.49\textwidth}
\includegraphics[scale=0.37]{figures/xp_kmp_m10.png}
\end{minipage}
\begin{minipage}{0.49\textwidth}
\includegraphics[scale=0.37]{figures/xp_kmp_m100.png}
\end{minipage}

\caption{\label{xp:ent}%
For each value of the collision entropy ($x$-axis), 
$10\,000$ random texts of length $1\,000$ and patterns of length $100$ were generated, 
and the four algorithms were applied to them. 
On the left $m=10$, and on the right $m=100$.
The $y$-axis is the average time in microseconds.
The added value of using the filter method, compared to the simple 
\FSkipped ~version, is more important when the entropy is low, 
whereas the difference in efficiency between the \FPD ~version and the \FSkipped ~version drops.%
}
\end{figure}

\bibliographystyle{splncs04}
\bibliography{biblio}

\begin{thebibliography}{10}
\providecommand{\url}[1]{\texttt{#1}}
\providecommand{\urlprefix}{URL }
\providecommand{\doi}[1]{https://doi.org/#1}

\bibitem{AC75}
Aho, A.V., Corasick, M.J.: Efficient string matching: an aid to bibliographic search. Commun. ACM  \textbf{18}(6),  333--340 (1975)

\bibitem{AmirAFS24}
Amir, O., Amir, A., Fraenkel, A., Sarne, D.: On the practical power of automata in pattern matching. {SN} Comput. Sci.  \textbf{5}(4), ~400 (2024)

\bibitem{auvray2025approximatecartesiantreematching}
Auvray, B., David, J., Ghazawi, S., Groult, R., Landau, G.M., Lecroq, T.: Approximate {C}artesian {T}ree matching with one difference (2025), \url{https://arxiv.org/abs/2505.09236}, submitted at TCS, Under revision.

\bibitem{bouvel2011average}
Bouvel, M., Chauve, C., Mishna, M., Rossin, D.: Average-case analysis of perfect sorting by reversals. Discrete Mathematics, Algorithms and Applications  \textbf{3}(03),  369--392 (2011)

\bibitem{david:hal-04616899}
David, J.: {Random Generation of Source Vectors with a fixed determining property} (Dec 2023), \url{https://hal.science/hal-04616899}, submitted at TCS, Under revision.

\bibitem{demaine09}
Demaine, E.D., Landau, G.M., Weimann, O.: On {C}artesian trees and {R}ange {M}inimum {Q}ueries. Algorithmica  \textbf{68}(3),  610--625 (2014)

\bibitem{FLPS22}
Faro, S., Lecroq, T., Park, K., Scafiti, S.: On the longest common {C}artesian substring problem. Comput. J.  \textbf{66}(4),  907--923 (2023)

\bibitem{flajolet2009analytic}
Flajolet, P., Sedgewick, R.: Analytic combinatorics. Cambridge Univ. Press (2009)

\bibitem{Funakoshi_et_al}
Funakoshi, M., Mieno, T., Nakashima, Y., Inenaga, S., Bannai, H., Takeda, M.: Computing maximal palindromes in non-standard matching models. In: IWOCA. pp. 165--179 (2024)

\bibitem{CTtime}
Gabow, H.N., Bentley, J.L., Tarjan, R.E.: Scaling and related techniques for geometry problems. In: STOC. pp. 135--143 (1984)

\bibitem{gawrychowski_et_al:LIPIcs.CPM.2020.14}
Gawrychowski, P., Ghazawi, S., Landau, G.M.: {On Indeterminate Strings Matching}. In: CPM. pp. 14:1--14:14 (2020)

\bibitem{KikuchiHYS20}
Kikuchi, N., Hendrian, D., Yoshinaka, R., Shinohara, A.: Computing covers under substring consistent equivalence relations. In: SPIRE. pp. 131--146 (2020)

\bibitem{KimC21}
Kim, S., Cho, H.: A compact index for {C}artesian tree matching. In: CPM. pp. 18:1--18:19 (2021)

\bibitem{KimH24}
Kim, S., Han, Y.: Approximate {C}artesian {T}ree pattern matching. In: DLT. pp. 189--202 (2024)

\bibitem{KMP77}
Knuth, D.E., Morris, Jr, J.H., Pratt, V.R.: Fast pattern matching in strings. SIAM J. Comput.  \textbf{6}(1),  323--350 (1977)

\bibitem{NFNI2021}
Nishimoto, A., Fujisato, N., Nakashima, Y., Inenaga, S.: Position heaps for {C}artesian-tree matching on strings and tries. In: SPIRE. pp. 241--254 (2021)

\bibitem{OizumiKMIA22}
Oizumi, T., Kai, T., Mieno, T., Inenaga, S., Arimura, H.: Cartesian tree subsequence matching. In: CPM. pp. 14:1--14:18 (2022)

\bibitem{osterkamp_et_al:LIPIcs.CPM.2025.26}
Osterkamp, E.M., K\"{o}ppl, D.: {Extending the Burrows-Wheeler Transform for Cartesian Tree Matching and Constructing It}. In: CPM. pp. 26:1--26:17 (2025)

\bibitem{PALP19}
Park, S., Amir, A., Landau, G., Park, K.: Cartesian tree matching and indexing. In: CPM. pp. 1--14 (2019)

\bibitem{PARK2020}
Park, S.G., Bataa, M., Amir, A., Landau, G.M., Park, K.: Finding patterns and periods in {C}artesian tree matching. Theoret. Comput. Sci.  \textbf{845},  181--197 (2020)

\bibitem{SGRFLP21}
Song, S., Gu, G., Ryu, C., Faro, S., Lecroq, T., Park, K.: Fast algorithms for single and multiple pattern {C}artesian tree matching. Theoret. Comput. Sci.  \textbf{849},  47--63 (2021)

\bibitem{TsujimotoSMNI24}
Tsujimoto, T., Shibata, H., Mieno, T., Nakashima, Y., Inenaga, S.: Computing longest common subsequence under {C}artesian-tree matching model. In: IWOCA. pp. 369--381 (2024)

\bibitem{vuillemin80}
Vuillemin, J.: A unifying look at data structures. Commun. ACM  \textbf{23}(4),  229–239 (1980)

\end{thebibliography}

\newpage
\appendix
\section{Appendix \label{sec:app}}
\subsection{On the Linear Representations of Cartesian Forests}

We show that two sequences $x$ and $y$ share the same \CF\ if and only if they share the same \FPD\ representation.

\begin{proof}
If the sequences have different lengths, then they have different \CF s and different \FPD\ representations and it holds. We need only consider the case when $x$ and $y$ have the same length $m$. We use a proof by induction on $m$.\\
For $m=0$, both \CF s and their representations are empty and trivially match.
For $m=1$, both \CF s have only one root and are the same, and their \FPD\ representation are both equal to $0$. The property holds for $m=1$.
Let us now assume that the property holds for a certain $m=k$ and move onto $m = k+1$.\\
{\bf Same \CF $\implies$ same \FPD }:
Suppose that $x[1\ldots k+1]$ and $y[1\ldots k+1]$ share the same \CF, two cases arise.
\begin{itemize}
    \item If positions $k+1$ have a parent in both \CF s, 
    let $\alpha$ (resp. $\beta$) be the parent of $k+1$ in $F(x[1\ldots k+1])$ (resp. $F(y[1\ldots k+1])$). 
    Since both sequences share the same \CF, we must have $\alpha = \beta$. 
    We have $\FPDD_x[1\ldots k] = \FPDD_y[1\ldots k]$ by induction hypothesis.
    If $\alpha$ and $\beta$ are parents then 
    $\FPDD_x[k+1] = \FPDD_y[k+1] = k+1-\alpha = k+1-\beta$.
    Therefore $\FPDD_x[1\ldots k+1] = \FPDD_y[1\ldots k+1]$.
    
    \item If positions $k+1$ have siblings in both \CF s, 
    let $\alpha$ (resp. $\beta$) be the sibling of $k+1$ in
    $F(x[1\ldots k+1])$ (resp. $F(y[1\ldots k+1])$). 
    The same reasoning as before applies, except
    $\FPDD_x[k+1] = \FPDD_y[k+1] = -(k+1-\alpha) = -(k+1-\beta)$.
    Therefore $\FPDD_x[1\ldots k+1] = \FPDD_y[1\ldots k+1]$.
    
    \item If positions $k+1$ possess neither parents or siblings, then we have $\FPDD_x[k+1] = \FPDD_y[k+1] = 0$ and thus $\FPDD_x[1\ldots k+1] = \FPDD_y[1\ldots k+1]$.
\end{itemize}
{\bf Same \FPD $\implies$ same \CF}:
Suppose that $\FPDD_x[1\ldots k+1] = \FPDD_y[1\ldots k+1]$.
It implies $\FPDD_x[1\ldots k] = \FPDD_y[1\ldots k]$
and also $F(x[1\ldots k]) = F(y[1\ldots k])$ by induction hypothesis. 
From $F(x[1\ldots k])$, we obtain $F(x[1\ldots k+1])$ in the following way. 
\begin{itemize}
    \item 
    If $\FPDD_x[k+1] = \FPDD_y[k+1]\neq 0$, let $\alpha = k+1 - | \FPDD_x[k+1] |$.
    \begin{itemize}
         \item If $\FPDD_x[k+1] < 0$, then $\alpha$ is the sibling of $k+1$ in both $F(x[1\ldots k+1])$ and $F(y[1\ldots k+1])$. 
        \item Otherwise, if $\FPDD_x[k+1] > 0$, that means $\alpha$ is the parent of $k+1$ in $F(x[1\ldots k+1])$ and $F(y[1\ldots k+1])$. Also the right sub-Forest of $\alpha$ becomes the left sub-Forest of $k+1$.
    \end{itemize}
    
    \item
    If $\FPDD_x[k+1] = 0$, $k+1$ is the root of $F(x[1\ldots k+1])$ and $F(x[1\ldots k])$ becomes the left sub-Forest of $x[k+1]$.
    We derive $F(y[1\ldots k+1])$ from $F(y[1\ldots k])$ in the same way.
\end{itemize}
In all three cases, we can conclude that $F(x[1\ldots k+1]) = F(y[1\ldots k+1])$.

Hence, there is a one-to-one mapping between \CF s and \FPD\ representations.
\end{proof}

A similar proof using the same arguments can be used to prove a one-to-one mapping between \CF s and the \FSkipped\ representation.

\subsection{On the Symbolic Method}

The Symbolic Method was introduced in~\cite{flajolet2009analytic}.
It is a method to directly translate a recursive definition
of a combinatorial class into a generation function. 
Recall that a combinatorial class is a set of objects associated
with a size function, such that for all $n\ge 0$, the subset
of objects of size $n$ is finite.
This section only contains the information necessary to 
understand how we obtained the generating function $F(z)$.
Therefore, we focus on unlabelled objects and simple
operations. A reader willing to learn more should read~\cite{flajolet2009analytic}.

Let $\mathcal{C}$ be a combinatorial class and 
$\mathcal{C}_n$ be the set of objects of $\mathcal{C}$ of size $n$. We have  $\mathcal{C} = \bigcup_{n\ge 0} \mathcal{C}_n$
The generating function $C(z)$ of $\mathcal{C}$ is defined as a power series 
$$C(z) = \sum_{n\ge 0} c_n z^n,$$
where $c_n = | \mathcal{C}_n|$.

The symbolic method is a dictionary that translates operations
on sets into operations onto their associated generating functions.

\paragraph{The Null size element.} Assume $\mathcal{C} = \mathcal{C}_0$, that is it only contains an object of size $0$.
Then $C(z)=1$.

\paragraph{The size 1 element.} Assume $\mathcal{C} = \mathcal{C}_1$, that is it only contains an object of size $1$.
In our case, this would be a tree containing a unique node.
Then $C(z)= z$.

\paragraph{Disjoint Union.} Assume $\mathcal{C} = D \cup E$,
where $D$ and $E$ are two combinatorial classes.
Then $C(z) = D(z) + E(z)$.

\paragraph{Cartesian Product} Assume $\mathcal{C} = D \times E$,
where $D$ and $E$ are two combinatorial classes.
Then $C(z) = D(z) \cdot E(z)$.

We recall the recursive decomposition that we obtained for
\CF:
$$
\begin{cases}
\includegraphics[scale=0.2]{figures/grammaire_ligne1-eps-converted-to.pdf}\\
\includegraphics[scale=0.2]{figures/grammaire_ligne2-eps-converted-to.pdf}\\
\end{cases}
$$

The first combinatorial class $F$ is defined as:
\begin{itemize}
\item
either an empty forest (marked with $\emptyset$ whose
associated generating function is $1$).
\item  
or a Cartesian product between a sequence of siblings,
whose generating function will be denoted $S(z)$ and 
a tree, that is decomposed into three parts:
\begin{itemize}
    \item a root, that is a tree of size $1$, whose generating 
    function is $z$
    \item a left and a right sub-forest, that is to say two Cartesian Forests.
\end{itemize}
The generating function of this tree is therefore equal to
$F(z) \cdot z \cdot F(z)$.
\end{itemize}
Using the Symbolic Method we therefore obtain that 
$F(z) = 1+ z\cdot F^2(z)\cdot S(z)$.
Using the same ideas, we obtain that 
$$ S(z) = z\cdot F(z)\cdot S(z) +1$$
$$\implies S(z) - z\cdot F(z)\cdot S(z) =  1$$
$$\implies S(z) =  \frac{1}{1- z\cdot F(z)}$$
And therefore
$$F(z) = 1 + \frac{z\cdot F^2(z)}{1- z\cdot F(z)}$$
$$\implies F(z) \cdot(1- z\cdot F(z)) = 1- z\cdot F(z) + z\cdot F^2(z)$$
$$\implies F(z)- z\cdot F^2(z) = 1- z\cdot F(z) + z\cdot F^2(z)$$
$$\implies  2z\cdot F^2(z) - (1+z)\cdot F(z) + 1 = 0$$
Solving this equation we obtain
$\Delta = b^2 - 4ac = (1+z)^2 - 8z = 1 - 6z + z^2$
and 
$$F(z) = \frac{-b + \sqrt{\Delta}}{2a} = \frac{1+z+\sqrt{1 - 6z + z^2}}{4z} $$
Note here that we do not consider $\frac{-b - \sqrt{\Delta}}{2a}$
because we know $F(z)$ must have positive coefficients.

\subsection{Bijection with Schröder Trees and Parentheses Words}

In  \figurename~\ref{fig:bijection} below, we omit to draw
the parentheses surrounding the entire sequence, contrary to what we did in Section 6.4. This is done in order to match classical representations of the Parentheses Words (we write $\bm{(}\square\square\bm{)}\square$
instead of $\bm{(}\bm{(}\square\square\bm{)}\square\bm{)}$ for instance).

\begin{figure}[ht!]
    \centering
    \includegraphics[scale=1]{figures/bijection_schroder-eps-converted-to.pdf}
    \caption{Correspondence between \CF s~with $n$ nodes, \ST s with $n+1$ leaves and sequences of length $n+1$ with parentheses when $n\in\{1,2,3\}$ (respect. top, middle, bottom). $\square$ symbols represent leaves and $\circ$ symbols represent internal nodes.\label{fig:bijection}}
\end{figure}

\end{document}